\newtheorem{theorem}{\bf Theorem}
\newtheorem{Definition}{\bf Definition}
\newtheorem{proof}{\bf Proof}
\begin{document}

\title{Controllability analysis of directed networks in finite states based on pruning motif isomorph}
\author{Jiarui Zhang, Jian Huang, Ji Guang and Jialong Gao
}

\maketitle

\begin{abstract}
	The current driver nodes search methods are difficult to cope with large networks, and the solution process does not consider the node cost. In order to solve the practical control problem of networks with different node costs in finite states, this paper proposes a pruning and motif isomorph search method for driver node set. Firstly, we prove the sufficient conditions for the network to be strictly controllable under partial nodes control, then we classify the nodes and prove the equivalence controllability of the pruning network, and then we establish three models of maximum augmenting path search, local pruning and motif matching to form a complete driver nodes set search algorithm. Finally, the algorithm is validated by real networks. The results show that our method not only guarantee the accuracy of search results, but also has the low time complexity, which can efficiently handle large networks, and no more than 16.84\% of the maximum driver nodes can control the whole network.
\end{abstract}

\begin{IEEEkeywords}
	Complex network, driver node, network controllability, network motif
\end{IEEEkeywords}

\section{Introduction}
\label{sec:introduction}
\IEEEPARstart{T}{he} current era is known as the "network era", and complex networks abstract various large and complex associations in the real world into network diagrams, which exist in all aspects of people's lives, including electric power networks\cite{b1}, biological networks\cite{b2}, financial networks\cite{b3}, transportation networks\cite{b4}, social networks\cite{b5} and so on. Complex networks exhibit rich dynamics and control problems have become an important research direction in network science. With the increasing scale of networks, how to effectively control these increasingly complex networks has become an important topic in control theory today\cite{b7}.

The control theory of complex networks originates from modern control theory\cite{b8,b9,b10,b11}, but has its own characteristics. Network control refers to the arrival of a network from an initial state to an arbitrary state in finite time under the action of appropriate inputs. In the real world, sometimes the network cannot reach the desired state by self-synchronization, in this case, it is necessary to add control to the network. In practical applications, it is not possible to control all nodes. To solve this problem, traction control has been widely used as a proven method.

Traction control\cite{b12,b13,b14} is the first control technique applied to complex networks, which is centered on solving two problems\cite{b15}: the feasibility of synchronizing the network states and the application of control signals to some nodes. The control method of implicating the whole network\cite{b16} not only requires a large amount of computation, but also has certain requirements on the network structure, which requires the network to have a large enough coupling strength. In order to solve the control problem of networks with different structures, Liu et al.\cite{b17} published a study on the structural controllability of complex networks in Nature in 2011, but the theory is only applicable to the directed complex networks with independently selectable edge weights. Yuan et al.\cite{b18} then proposed the theory of strict controllability, using the PBH criterion to solve the controllability of networks with deterministic edge weights and different structures. In addition, there are more studies on the controllability of different networks, including time-varying networks\cite{b19}, deeply coupled networks\cite{b20}, multilayer networks\cite{b21} and so on. However, there is a general problem that these methods are based on the controllability analysis of homogeneous complex networks, without considering the cost of imposing control on the nodes.

Many real-world complex system problems can be abstracted as network controllability problems. For example, in a power network, the overall power supply of a region is controlled by substations\cite{b22}; in a biological network, the selection of genetic nodes that act as drug targets makes the network of organisms develop well\cite{b23}; in a command and control network, the charge unit is fed with charge information to accomplish the mission, etc.\cite{b24}. In the above practical applications, it is impossible to control all the nodes in the network, and it is very difficult to find the nodes that control the network when the large scale complex network contains thousands of nodes. In addition, the cost of controlling different nodes is different and cannot be calculated directly by the method mentioned above.

To solve the above problem, this paper investigates the practical control problem of a directed complex network with different node costs in a finite state. By setting up a three-node motif, we give the sufficient conditions for the network to be strictly controllable under partial node control signals. In terms of network control theory, leaf nodes do not change the zero degree of the network\cite{b25}, which means that pruning the leaf nodes does not change the controllability of the network. Therefore, after classifying the nodes, we form a driver node search algorithm based on three models of maximum scalable path search, equivalent controllability of the network after pruning, and motif isomorphism matching. The experimental results show that our method can greatly reduce the computation of control inputs for large-scale networks and open up a new path for complex network control.

\subsection{Network Dynamics Model and Controllability Conditions}
Given a directed weighted complex network model as a binary group, $G = (V,W)$, where $V = \{ {v_i}|i = 1,2, \cdots ,N\}$ denotes the set of nodes in the network with total number $N$. Consider a continuous time dissipative coupled network consisting of $N$ nodes, based on the linearly coupled set of ordinary differential equations, the following state equation is established for the $i$th node in the network, as shown in \eqref{eq1}.
\begin{equation}
	\label{eq1}
	\dot{x}_i = f({x_i}) + \sum\limits_{j = 1}^N {{a_{ij}}} {x_j}
\end{equation}

where, ${x_i} = (x_{_i}^{(1)},x_{_i}^{(2)}, \cdots ,x_{_i}^{(n)}) \in {\Re ^n}$ is the state variable of node $i$, $f({x_i}) \in {\Re ^n}$ denotes the capability value of the node, and the coupling matrix $A = ({a_{ij}}) \in { \Re ^{N \times N}}$ satisfies the dissipative coupling condition $\sum {{a_{ij}}} = 0$.

The coupling matrix $A$ of the directed graph is not a symmetric matrix, so we need to redefine the coupling matrix $A$ to describe the topology of the directed weighted complex network. Based on the motif network\cite{b26}, the motif Laplacian matrix is constructed and defined as follows: firstly, 13 three-node motifs $M_k$ are defined, as shown in Fig. \ref{fig1}.

\begin{figure}[htp]
	\centerline{\includegraphics[width=0.9\columnwidth]{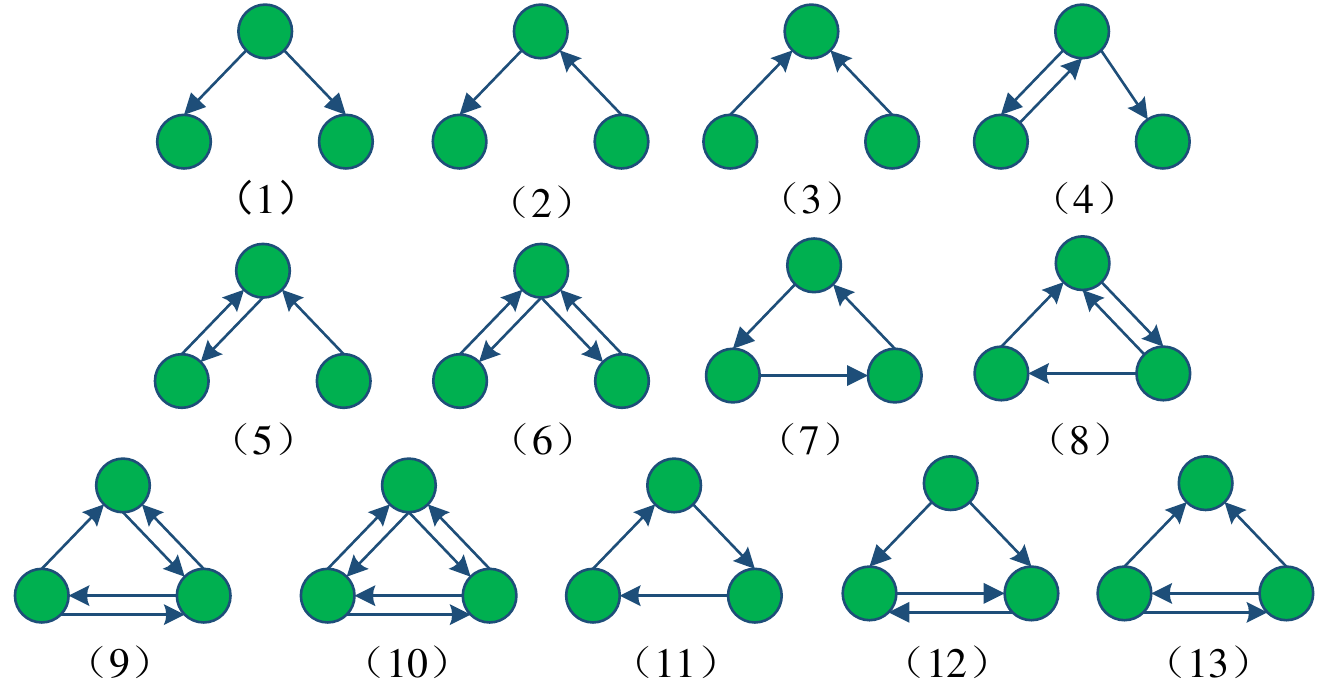}}
	\caption{Motifs of three-node.}
	\label{fig1}
\end{figure}

If there is a connection between node $i$ and node $j(i \ne j)$, and the connected edge appears ${\mu _{ij}}$ times in the module $M_k$, then ${a_{ij}} = \mu _{ij}$, otherwise, ${a_{ij}} = 0$. Second, the size of the diagonal elements is defined in this paper. Therefore, the matrix $A$ is shown in \eqref{eq2}.

\begin{equation}
	\label{eq2}
	\left\{
	{\begin{array}{*{20}{c}}
			{{a_{ij}} = {\mu _{ij}} \ge 0}\\
			{{a_{ii}} =  - \sum\limits_{j = 1.j \ne i}^N {{a_{ij}}}  =  - \sum\limits_{j = 1.j \ne i}^N {{a_{ji}}} }
		\end{array}}
		\right.
	\end{equation}
	
	The coupling matrix $A$ is our newly defined Laplacian matrix, which is used to represent the network structure. The matrix $A$ is an integrable real symmetric matrix with only one eigenvalue of 0, and the corresponding eigenvector is $(1,1,\cdots,1)^\mathrm{T}$, the rest of the eigenvalues are negative real numbers, and the eigenvectors form an $(N-1)$-dimensional subspace orthogonal to the eigenvectors $(1, 1,\cdots,1)^\mathrm{T}$. Diagonalize the matrix $A$ by ${A^\mathrm{T}} = B\Lambda {B^{ - 1}}$, where $B$ is the invertible square matrix, $\Lambda$ is the diagonal matrix, and $\Lambda = diag({\lambda _1},{\lambda _2}, \cdots ,{\lambda _N}) $.
	
	The model built by \eqref{eq1} can well describe the dynamic properties and topology of a directed or undirected complex network. When $A$ is a symmetric array, it can describe an undirected weighted complex network, and when it is an asymmetric array, it can also describe a directed weighted network. Therefore, for more generality, we use directed networks to illustrate the control to the steady state.
	
	\begin{Definition}
		There exist finite states ${s_i}(t), i = 1, 2, \cdots ,l$, when traction control is applied to some nodes in the network so that the whole network achieves synchronization in a certain state, that is, when there exists ${t_N} = t$, and the states of the nodes in the system ${x_1} = {x_2} = \cdots = {x_N} = s(t)$, the network achieves $s(t)$ state synchronization. By the dissipative coupling condition, the synchronized state $s(t) \in {\Re ^n}$ must be the solution of a single isolated node, satisfying $\dot{s}(t) = f(s(t)) = 0$.
	\end{Definition}
	
	The state equation of the network subjected to control signal input can be written as \eqref{eq3}:
	
	\begin{equation}
		\label{eq3}
		\left\{
		\begin{aligned}
			&\dot{x}_{ik} = f({x_{ik}}) + \sum\limits_{j = 1}^N {{a_{ij}}} {x_j} - {w_i}{d_i}({x_i}(t) - s(t))\hspace{-1cm}\\
			&\dot{x}_i = f({x_i}) + \sum\limits_{j = 1}^N {{a_{ij}}} {x_j}\\
			&i = 1,2, \cdots ,l\\
			&k = l + 1,l + 2, \cdots ,N
		\end{aligned}
		\right.
	\end{equation}
	
	Applying a control signal to the controlled node, ${d_i} \ge 0$ is the gain of node $i$ to reach the synchronized state, and $\omega_i$ is the cost of applying control to this node.
	
	Setting ${\eta _i}(t) = {x_i}(t) - s(t)$, where $i = 1,2, \cdots ,N$, we have \eqref{eq4}.
	\begin{equation}
		\label{eq4}
		\dot{\eta}_i(t) = f({x_i}(t)) - f(s(t)) + \sum\limits_{j = 1}^N {{a_{ij}}H{\eta _j}(t) - {d_i}H} {\eta _i}(t)
	\end{equation}
	
	By linear approximation of the synchronous state $s(t)$, we can get \eqref{eq5}.
	\begin{equation}
		\label{eq5}
		\begin{aligned}
			\dot{\eta} &= \eta [Jf(s)] + (A - D)\eta \\
			&= [Jf(s) + (A - D)]\eta
		\end{aligned}
	\end{equation}
	
	\noindent where, $Jf(s)$ is the Jacobi matrix of $f(x)$ in state $s(t)$, $\eta = {({\eta _1},{\eta _2}, \cdots ,{\eta _N})^\mathrm{T}}$, $D = diag({d_1},{d_2}, \cdots ,{d_N})$. When $i=1,2,\cdots, l$, ${d_i} > 0$. When $i =l+1, \cdots, N$, ${d_i} = 0$.
	
	Therefore, the local stability of the synchronous state $s(t)$ of the controlled network translates into the stability of the linear equation shown in \eqref{eq5}. When the feedback gain is $d \to \infty$, there exists a constant $\rho < 0$ such that $[Jf(s) + \rho {I_n}]$ is a $Hurwitz$ matrix, the network shown in \eqref{eq1} can be controlled to the synchronous state $s(t)$, as in \eqref{eq6}.
	\begin{equation}
		\label{eq6}
		\rho  \le {\lambda _N}(A - D)
	\end{equation}
	where ${\lambda _N}(A - D)$ is the maximum eigenvalue of the matrix $(A - D)$ and the minimum value is 0, so \eqref{eq6} can be satisfied. When the linear feedback gain is added to some nodes, that is, $\mathop {\lim }\limits_{d \to \infty } {\lambda _N}(A - D) = {\lambda _N}(\overline A )$, where ${\lambda _N}(\overline A )$ is the matrix obtained by removing the rows and columns in which control nodes $i_1, i_2, \cdots, i_l$ are located from the matrix $A$, and the synchronization state of the network is equivalent to \eqref{eq7}.
	\begin{equation}
		\label{eq7}
		\left\{
		\begin{aligned}
			&\dot{x}_i = s,i = 1,2, \cdots ,l\\
			&\dot{x}_i = f({x_i}) + \sum\limits_{j = 1}^N {{a_{ij}}} H({x_j}),i=l + 1,\\ &\hspace{0.9cm}l + 2, \cdots ,N
		\end{aligned}
		\right.
	\end{equation}
	
	On the other hand, it is necessary to ensure that some of the nodes of the network are in stable equilibrium after being controlled, so the following theorem is given.
	\begin{theorem}
		Suppose a dynamic system with state equation $X(t)=f[x(t), t]$, where $X(t)$ is the state vector, which is a function of state and time. If the system has a scalar function $V[X(t)]$  with positive definite and continuous first-order partial derivatives, in some neighborhood of the equilibrium state $X_e$, whose derivative $V$ is negative definite. When $\left\| X \right\| \to \infty$, there is $V \to \infty $, then the equilibrium state is consistently asymptotically stable over a large range.
	\end{theorem}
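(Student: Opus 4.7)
The plan is to recognize this statement as the classical Lyapunov global (in-the-large) asymptotic stability theorem and to prove it by the standard three-stage argument: stability, attractivity, then extension to the entire state space via radial unboundedness. Throughout I would work with the shifted coordinate $Y = X - X_e$ so that $Y_e = 0$, and without loss of generality assume $V(X_e) = 0$ (subtract the constant otherwise).

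First I would establish Lyapunov stability at $X_e$. Fix any $\varepsilon > 0$ small enough that the closed ball $\overline{B}_\varepsilon(X_e)$ lies in the neighborhood where $\dot V$ is negative definite. Because $V$ is continuous and positive definite, the quantity $m(\varepsilon) = \min_{\|X - X_e\| = \varepsilon} V(X)$ is strictly positive, and by continuity of $V$ at $X_e$ I can choose $\delta > 0$ with $V(X) < m(\varepsilon)$ whenever $\|X - X_e\| < \delta$. For any trajectory starting in this $\delta$-ball, the inequality $\dot V(X(t)) \le 0$ (from negative definiteness of $\dot V$) forces $V(X(t)) < m(\varepsilon)$ for all $t \ge t_0$, hence by the definition of $m(\varepsilon)$ the trajectory can never touch the sphere of radius $\varepsilon$, giving stability.

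Next I would prove local attractivity. Along any trajectory remaining in the neighborhood, $V(X(t))$ is monotonically non-increasing and bounded below by $0$, so it converges to some limit $V_\infty \ge 0$. Suppose for contradiction that $V_\infty > 0$; then $X(t)$ is bounded away from $X_e$ in a compact annular region where $\dot V \le -\alpha$ for some $\alpha > 0$ (by negative definiteness and compactness), which yields $V(X(t)) \to -\infty$, a contradiction. Hence $V(X(t)) \to 0$, and using positive definiteness of $V$ together with its continuity one concludes $X(t) \to X_e$. Uniformity in $t_0$ is automatic because the hypotheses on $V$ and $\dot V$ are time-invariant.

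Finally I would extend the conclusion to the large via the radial unboundedness hypothesis $V(X) \to \infty$ as $\|X\|\to\infty$. The key consequence is that every sublevel set $\Omega_c = \{X : V(X) \le c\}$ is bounded, hence (by continuity of $V$) compact and positively invariant under the flow since $\dot V \le 0$. Given any initial condition $X(t_0)$, choose $c = V(X(t_0))$; the trajectory remains trapped in the compact set $\Omega_c$, and the monotone-convergence argument from the previous paragraph applies globally to give $V(X(t)) \to 0$ and therefore $X(t) \to X_e$. The one step that needs most care is this last extension: without radial unboundedness one could construct level sets that are unbounded, along which the trajectory could drift to infinity while $V$ still decreases, so that hypothesis is precisely what blocks such escape and turns local asymptotic stability into the desired global ("large range") result.
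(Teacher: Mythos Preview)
Your argument is the standard and correct proof of the classical Lyapunov global asymptotic stability theorem, and each of the three stages (stability via the sphere-minimum argument, attractivity via monotone convergence of $V$ along trajectories, and globalization via compactness of sublevel sets from radial unboundedness) is sound.

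However, the paper does something quite different. It does not attempt to prove Theorem~1 at all; it treats the theorem as the well-known Lyapunov criterion and the block labeled ``Proof'' that follows is in fact an \emph{application} of Theorem~1 to the specific linearized network dynamics $\dot X(t)=(F+A)X(t)$ considered in the paper. Concretely, the paper chooses the quadratic candidate $V[X(t),t]=X^{\mathrm T}(t)PX(t)$, computes $\dot V = X^{\mathrm T}(F^{\mathrm T}+F)X + X^{\mathrm T}(A^{\mathrm T}+A)X$, bounds each term via the largest diagonal entry $\gamma_{\max}$ and the spectrum of the Laplacian $A$, and concludes $\dot V<0$, thereby verifying the hypotheses of Theorem~1 for that particular system. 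So your proposal proves the abstract theorem, while the paper's ``proof'' establishes that the paper's network model satisfies its hypotheses; the two are complementary rather than competing, and what you wrote would actually fill the gap the paper leaves by citing Theorem~1 without justification.
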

	
	\begin{proof}
		When $i = l + 1, l + 2, \cdots, N$, let $X(t)=FX(t)+AX(t)$, $F=diag\{{\gamma _1},{\gamma _2},\cdots ,where {\gamma _N}$ is the diagonal matrix, ${\gamma _i}(i = 1,2, \cdots ,N)$ are the node ${v_i}$ dynamics model parameters. Take the Lyapunov function as $V[X(t),t] = X^\mathrm{T}(t)PX(t)$, where $P$ is a positive definite matrix, with $V[X(t),t] > 0$, and $\left\| X \right\| \to \infty $, with $V[X(t),t] > 0$. The derivative for $V[X(t),t]$ can be obtained:
		\begin{equation}
			\begin{aligned}
				\frac{{dV[X(t),t]}}{{dt}} &= \dot{X}^{\mathrm{T}}(t)PX(t) + X(t)P \dot{X}^{\mathrm{T}}(t)\\
				&= {X^\mathrm{T}}(t)({F^\mathrm{T}} + F)X(t) + \\&\hspace{0.5cm} {X^\mathrm{T}}(t)({A^\mathrm{T}} + A)X(t)
			\end{aligned}
		\end{equation}
		
		\begin{equation}
			\begin{aligned}
				{X^\mathrm{T}}(t)({F^\mathrm{T}} + F)X(t) &= {X^\mathrm{T}}(t)diag\{ 2{\gamma _1}, 2{\gamma _2},\\
				&\hspace{0.5cm}  \cdots ,2{\gamma _N}\} X(t)\\
				&\le 2{\gamma _{\max }}{X^\mathrm{T}}(t)PX(t)
			\end{aligned}
		\end{equation}
		where ${\gamma _{\max }} = \max \{ {\gamma _1},{\gamma _2}, \cdots ,{\gamma _N}\}$, $A$ is the Laplacian matrix, ${A^\mathrm{T}} = T\Lambda {T^{ - 1}}$, $\Lambda = diag({\ lambda _1},{\lambda _2}, \cdots ,{\lambda _N})$ and $0 = {\lambda _1} \ge {\lambda _2} \ge \cdots \ge {\lambda _N}$.
		
		Similarly, ${X^\mathrm{T}}(t)({A^\mathrm{T}} + A)X(t) \le 2{\lambda _N}{X^\mathrm{T}}(t)PX(t)$, when $x(t) \ne s(t)$ has ${X^\mathrm{T}}(t) ({F^ \mathrm{T} } + F)X(t) > 0$ and ${X^\mathrm{T}}(t) ({A^\mathrm{T}} + A)X(t) \le 0$, therefore $\dot V^ [X(t),t] < 0$. According to Theorem 1, it is clear that the dynamical model constructed in this paper for the directed weighted network is consistently asymptotically stable over a large range and has a stable equilibrium state $s(t)$.
	\end{proof}
	
	In conclusion, the traction control of a part of nodes in a complex network can make the network with finite number of states reach a certain synchronous state as desired.
	
	\subsection{Search Model Based on Pruning Motif Isomorph}
	There are two sufficient conditions for the system to be controllable. One is from the paper of Liu et al.\cite{b17} on the complex network structures controllability published in Nature in 2011, which proposed to control the whole network by applying signals to the driver nodes. In addition, network controllability can be determined by Kalman criterion, and then the maximum matching theorem was proposed to compute the maximum driver node set by bipartite graph matching with the KM algorithm, but this method has high complexity and is difficult to perform fast computation for large-scale networks. Another one is the controllability judgment of undirected and directed, unweighted and weighted networks by PBH criterion by Yuan et al\cite{b18}. The maximum set of driver nodes is obtained by primary row transformation of the $[A - \lambda I]$ matrix, but this method also has a high complexity for large dense networks. One of the above two conditions holds that the network is controllable. Condition 2 makes up for the fact that condition 1 can only be used for directed networks, but neither of them takes into account the cost of applying control signals to the nodes.
	
	In order to solve the above problems, we construct a new search model. By analyzing the influence of different nodes on the controllability of the network, we classify the nodes, prune the network, reduce the computational effort, and finally build a complete search model for the driver node set by motif matching.
	
	\subsubsection{Node Classification}
	
	Applying control signals to different nodes in a complex network can have different effects. Taking a directed network with 5 nodes as an example, we can see that node 1 in Fig. \ref{fig2} always needs to apply a control signal, and it is optional whether nodes 3, 4 and 5 need to apply a control signal or not.
	
	\begin{figure}[htp]
		\centerline{\includegraphics[width=\columnwidth]{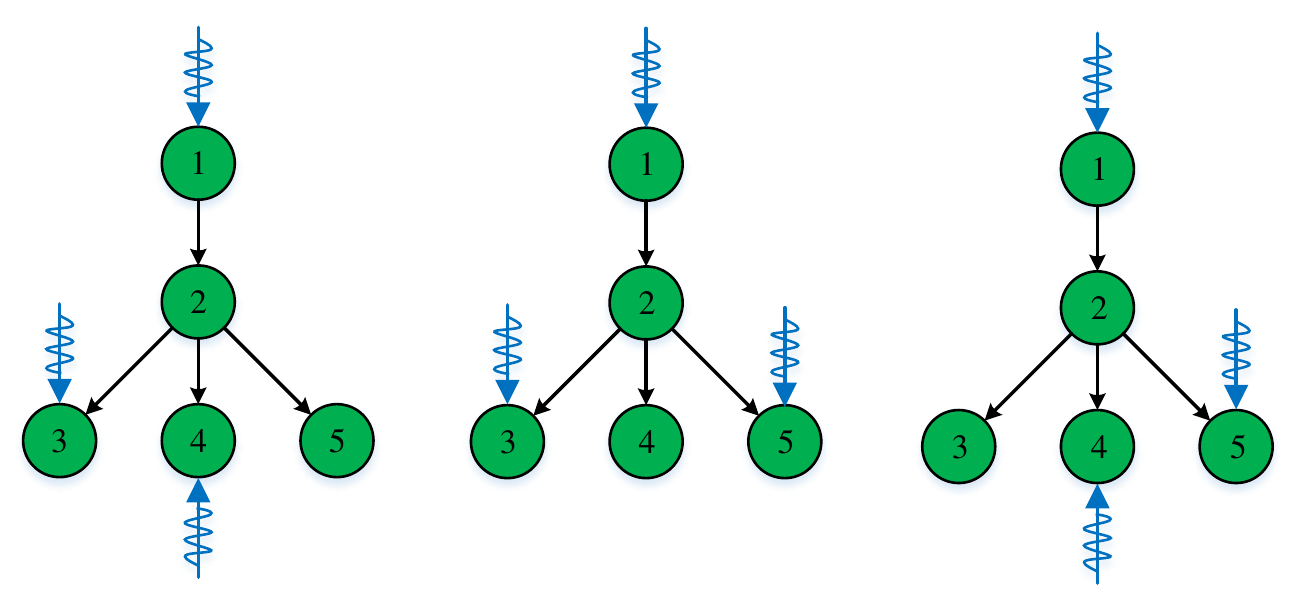}}
		\caption{Control signal application of five-node network.}
		\label{fig2}
	\end{figure}
	
	To address the shortcomings of the existing search methods for driver nodes, we first classify the nodes in the network, study the relationship between different nodes and the driver nodes, and perform the classification process to reduce the search space of the driver node set.
	
	In a directed network, nodes are connected to each other by directed edges, and considering the cost of nodes, changing the equation of state of the network only changes the value of the matrix $A$ on the diagonal, which does not affect the matrix controllability criterion. Therefore, the classification of nodes in this paper is consistent with the conclusion that the cost of nodes is zero, and nodes can be classified into the following four types by connecting edges:
	
	{\bf Type 1.} Isolated node, i.e., nodes with no input edge and no output edge, as shown in Fig. \ref{fig3} for node A, $\{ {v_{{N_i}}}\left| {{v_{{N_i}}} \in {V_N},i = 1,2 \cdots } \right.\}$, where ${V_N}$ is the set of isolated node.
	
	{\bf Type 2.} Incoming node, i.e., nodes with only input edges but no output edges, as shown as node B, $\{ {v_{{I_i}}} \left| {{v_{{I_i}}} \in {V_I},i = 1,2 \cdots } \right.\}$, where ${V_I}$ is the set of incoming nodes.
	
	{\bf Type 3.} Out-degree node, i.e., nodes with no input edges but only output edges, as shown as node C, $\{ {v_{{O_i}}}\left| {{v_{{O_i}}} \in {V_O},i = 1,2 \cdots } \right.\}$, where ${V_O}$ is the set of out-degree nodes.
	
	{\bf Type 4.} Fullness node, i.e., nodes with both input and output edges, are shown as node D, $\{ {{v_{{F_i}}}}\left| {{v_{{F_i}}} \in {V_F},i = 1,2 \cdots } \right.\}$, ${V_F}$ is the set of fullness nodes.
	
	\begin{figure}[htp]
		\centerline{\includegraphics[width=0.7\columnwidth]{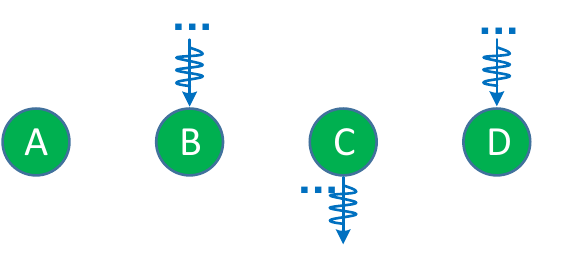}}
		\caption{Node classification.}
		\label{fig3}
	\end{figure}
	
	\subsubsection{Maximum Augmenting Path Search}
	
	\begin{Definition}
		The definition of augmenting path is that when a control signal is applied to a node, and the node is the source point, the control signal flows through the directed path to any node in the network, and the path that passes through it is called the augmenting path.
	\end{Definition}
	
	We can see that among the four types of nodes, two types of nodes, ${V_N}$ and ${V_O}$, do not have a directed edge pointed by the rest of the nodes in the network, i.e., these two types of nodes do not have a parent node and cannot pass control signals through other nodes in the network, so we need to apply control signals to them separately to make them as driver nodes. When the control signal is applied to ${V_O}$, the control signal is passed with the augmenting path $P$. Therefore, the longest directed path from this node can be a maximum augmenting path ${P_{\max }}$ of this driver node.
	
	After removing the maximum augmenting path in the network, if the control signal needs to pass through other nodes to the remaining isolated nodes, it must pass through the nodes on the removed maximum augmenting path, which is not consistent with the definition of network controllability that the control signal flows through a unique node. Therefore, the remaining isolated nodes need to be controlled separately. The specific process is shown in Algorithm 1.
	
	\begin{algorithm}[H]
		%\begin{algorithm}[H]%[htb] %算法的开始
		%\renewcommand\algorithmicensure {输出:}
		\caption{}
		\label{arg1}
		\begin{algorithmic}[1]
			\REQUIRE ~~\\
			Network, $G = (V,W)$\\
			Nodes cost, $cost(\cdot)$\\
			\ENSURE ~~\\
			Updated network, $G$\\
			Driver node set, $Driver$\\
			\STATE $Driver \leftarrow {V_N}$, $G \leftarrow (V - {V_N})$
			\IF{${V_I} \in V$}
			\STATE ${V_I} \leftarrow sort(cost({V_I}))$
			\FOR{$V_I$ in $sort({V_I})$}
			\STATE ${P_{\max }} \leftarrow DFS\& \min (cost(\sum\limits_{{x_i} \in P} {{x_i}} ))$
			\STATE $Driver \leftarrow \{ {y_i}\}$, $G \leftarrow (V - \{ {y_i}\} ),{y_i} \in {P_{\max }}$
			\STATE $Driver \leftarrow {V_N}$, $G \leftarrow (V - \{ {y_i}\} )$
			\ENDFOR
			\ENDIF
		\end{algorithmic}
	\end{algorithm}
	
	\subsubsection{Local Network Pruning}
	
	By pruning some of the augmenting paths in the network, we can reduce the network to pieces and greatly reduce the search space. The network contains nodes with minimum entry, including leaf structures with in-degree equal to 1, and we need to prune these nodes and their parent nodes until the entire network no longer contains. We prove the rationality and validity of this process with a single pruning.
	
	\begin{proof}
		The Laplacian matrix $A$ of the directed network is expressed in the following form. Also, for computational convenience, the node cost is assumed to be 0. Denoting the in-degree as the positive direction, the minimum in-degree node is assumed to be ordinal number 1, the length of vector $c$ can be expressed as the number of motif involved in this pruning node, the vector $\alpha$ and the vector $\beta$ denote the connected edges of this node and the parent node with the rest of the nodes, respectively. The matrix $A_0$ is a square matrix of order $(N - 3) \times (N - 3)$, which represents the connectivity of the remaining nodes after pruning this node and its parent nodes.
		
		\begin{gather*}A=
			\begin{pmatrix}
				0 & c & 0 & 0 \\
				c & 0 & \alpha^\mathrm{T} & 0\\
				0 & \alpha & 0 & \beta^\mathrm{T}\\
				0 & 0 & \beta & A_0
			\end{pmatrix}
			\quad
		\end{gather*}
		
		Assuming that the network is controllable, the maximum number of driver nodes is constant from the PBH criterion, ${N_D} = rank(\lambda {I_N} - A,B)$, $\lambda$ is any eigenvalue of matrix $A$, and $B$ is control input. According to the maximum matching theorem, the node has a parent node, which can not be a control node, therefore, the driver node set of the remaining network remains unchanged after pruning, i.e., ${N_D} = rank({\lambda _{{A_0}}}{I_{N - 3}} - {A_0},{B_0})$.
		
		In the following, we illustrate the controllability of the network in terms of the change of its rank. There exists an invertible matrix $D$ that makes \eqref{eq10} hold.
		
		\begin{equation}
			({\lambda _{{A_0}}}{I_{n - 3}} - {A_0},{B_0})D = ({I_{n - 3}},0)
			\label{eq10}
		\end{equation}
		where \begin{gather*}D=
			\begin{pmatrix}
				D_1 & D_2\\
				D_3 & D_4
			\end{pmatrix}
			B=\begin{pmatrix}
				0\\
				0\\
				B_0
			\end{pmatrix}
			\quad
		\end{gather*}
		
		At the same time, we set:
		\begin{gather*}S=
			\begin{pmatrix}
				1 & 0 & 0 & 0 \\
				0 & 1 & \alpha^\mathrm{T}D_1 & 0\\
				0 & 0 & 1 & \beta^\mathrm{T}D_3\\
				0 & 0 & 0 & 1
			\end{pmatrix}
			U=
			\begin{pmatrix}
				1 & 0 & 0\\
				0 & 1 & 0\\
				\alpha & D_{11}\beta & D
			\end{pmatrix}
		\end{gather*}
		
		%	\begin{strip}
		\begin{gather*}rank(S(\lambda {I_N} - A,B)U) =\\
			\begin{pmatrix}
				\lambda & -c & 0 & 0 & 0 \\
				-c & \lambda-\alpha^\mathrm{T}D_1\alpha & \alpha^\mathrm{T}D_1\alpha-\alpha^\mathrm{T} & 0 & 0\\
				0 & -\alpha & \lambda-\beta^\mathrm{T}D_3\beta & 0 & -\beta^\mathrm{T}D_2\\
				0 & 0 & 0 & I_{N-3} & 0
			\end{pmatrix}
		\end{gather*}
		%\end{strip}
		
		Now, when ${\beta ^T}{D_2} \ne 0$, $rank(S(\lambda {I_N} - A, B)\cdot U) = N$, both $S$ and $U$ are invertible matrices, and matrix multiplication does not affect the rank change, we have $rank(\lambda {I_N} - A, B) = N$, i.e., the network can be controlled by inputting control signals $B$ to the remaining nodes.
	\end{proof}
	
	From the above proof, we can see that the controllability of the remaining network is equivalent to the original network. Therefore, we can prune the nodes to achieve the effect of reducing computation and complexity. In the following, we give the pseudo code as shown in Algorithm 2.
	
	\begin{algorithm}[H] %算法的开始
		\caption{}
		\label{arg2}
		\begin{algorithmic}[1]
			\REQUIRE ~~\\
			Updated network, $G = (V,W)$\\
			Nodes cost, $cost(\cdot)$\\
			\ENSURE ~~\\
			Pruning network, $G_{tem}$\\
			\STATE ${V_O} \leftarrow sort(cost({V_O}))$
			\FOR{${v_O}$ in ${V_O})$}
			\IF{$count(V_O) < 4$}
			\STATE ${G_{tem}} \leftarrow (V - {v_o})$
			\ELSE
			\STATE ${G_{tem}} \leftarrow (V - {v_o} - \min (\cos t\{ {v_{{o^ + }}}\} ))$, ${\rm{  }}s.t.{\rm{ }}{v_{{o^ + }}}$ $\rightarrow {v_o}$
			\ENDIF
			\ENDFOR
		\end{algorithmic}
	\end{algorithm}
	
	\subsubsection{Motif Isomorphism Matching}
	When there are only three nodes left in the network, it is also difficult to determine the driver node set. Therefore, we need to continue to classify the motifs, so that different motifs correspond to different control methods. In this paper, we classify 13 motifs into three categories, which are shown below.
	
	{\bf Class A motif} correspond to 4, 6 and 13 of the three-node motifs, respectively. The first two subplots in Fig. \ref{fig4} show the correct signal imposition method, and the third subplot shows the incorrect signal imposition method.
	
	According to the directed edge in the motif, when the control signal is applied to the red node in the figure, the control of the motif cannot be achieved by only one signal. When the control signal is applied to the blue node, any node can control the whole motif. Therefore, the less costly node among the correct nodes in the class A motif is chosen as the driver node.
	
	\begin{figure}[htp]
		\centerline{\includegraphics[width=\columnwidth]{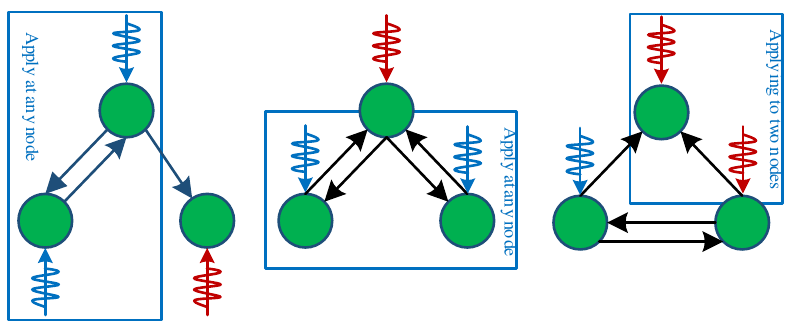}}
		\caption{Motif class A.}
		\label{fig4}
	\end{figure}
	
	{\bf Class B motif} correspond to 7 to 11 of the three-node motifs, respectively, and the first subfigure in Fig. \ref{fig5} shows the correct way to apply the signal. Each motif in the figure contains a directed loop, and the control signal applied to any node can be transmitted to the remaining two nodes, thus controlling the whole motif. Therefore, the less costly of the nodes in the class B motif is chosen as the driver node.
	
	\begin{figure}[htp]
		\centerline{\includegraphics[width=0.85\columnwidth]{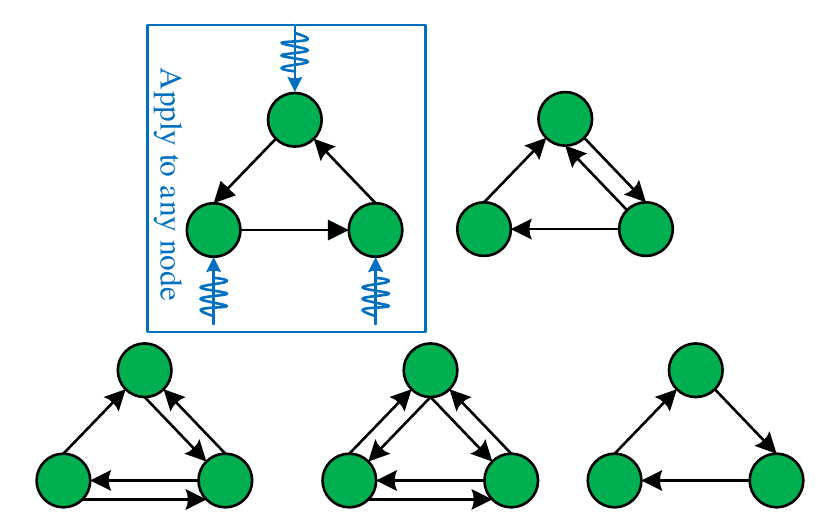}}
		\caption{Motif class B.}
		\label{fig5}
	\end{figure}
	
	{\bf class C motif} corresponds to 1 to 3, 5, 6 and 12 of the three-node motifs. Figure 6 shows the fixed control signal application positions, where the red signal is redundant and has no effect on the driver node. 3 motif requires two control signals, while the rest of the motifs require only one control signal. Class C motif can only apply the control signals shown in the figure, and the blue signal node is selected as the driver node.
	
	\begin{figure}[htp]
		\centerline{\includegraphics[width=0.9\columnwidth]{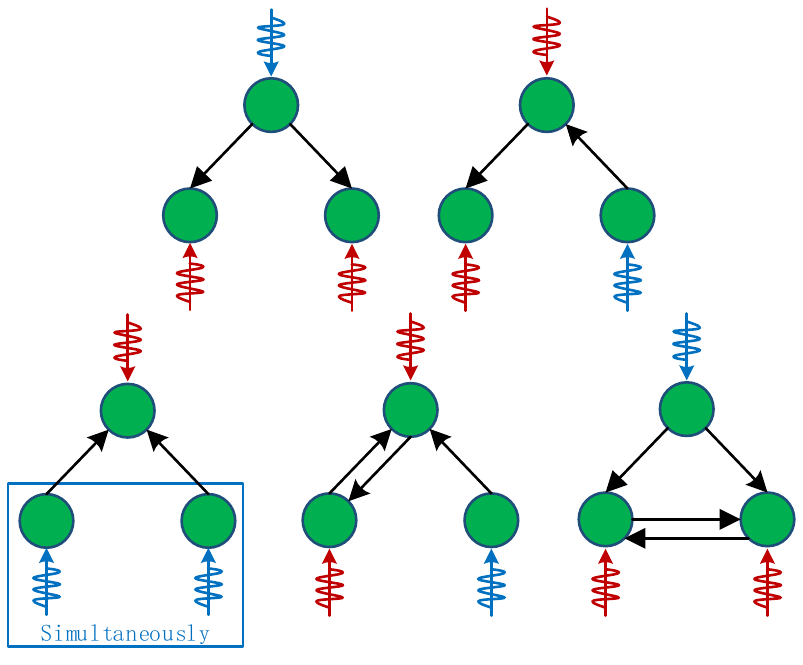}}
		\caption{Motif class C.}
		\label{fig6}
	\end{figure}
	
	Based on the above motif isomorphism matching method, we write the pseudo-code as shown in Algorithm 3.
	\begin{algorithm}[H] %算法的开始
		\caption{}
		\begin{algorithmic}[1]
			\REQUIRE ~~\\
			Pruning network, ${G_{tem}}$\\
			Nodes cost, $cost(\cdot)$\\
			\ENSURE ~~\\
			Driver node, $Driver$\\
			\FOR{$motif$ in $match(G_{tem})$}
			\IF{$motif \in A$}
			\IF{$motif == motif(4)$}
			\STATE ${V_{de}} = V - {V_I}$
			\ELSE
			\STATE ${V_{de}} = V - {V_{in\_degree = 2}}$
			\STATE $Driver \leftarrow \min \cos t\{ {V_{de}}\}$
			\ENDIF
			\ELSIF {$motif \in B$}
			\STATE $Driver \leftarrow \min \cos t\{ V\}$
			\ELSE
			\STATE $Driver \leftarrow \min \cos t\{ {V_o}\}$
			\ENDIF
			\ENDFOR
		\end{algorithmic}
	\end{algorithm}
	
	\section{Algorithm Flow for Driver Node Set Search and Complexity Analysis}
	
	\subsection{Algorithm Flow}
	
	Based on the pruning motif isomorph search model, a complete driver node set search algorithm is formed. The three algorithms in the above model are sub-algorithms, which are implementations of the driver node set search algorithm under different conditions, and the general framework of the main algorithm is given below.
	
	\begin{algorithm}[H] %算法的开始
		\twocolumn
		\caption{}
		\begin{algorithmic}[1]
			\REQUIRE ~~\\
			Network, ${G_{tem}}$\\
			Nodes cost, $cost(\cdot)$\\
			\ENSURE ~~\\
			Driver node set, ${N_D}$\\
			\IF{$V_O$ in $G$}
			\STATE $G,Driver \leftarrow$Algorithm\,1$(G)$
			\ELSE
			\WHILE{$number of V \in 2$}
			\STATE ${G_{tem}},Driver \leftarrow$Algorithm\,2$(G, cost)$
			\STATE $G,Driver \leftarrow$Algorithm\,3$(G_{tem}, cost)$
			\STATE $G,Driver \leftarrow$Algorithm\,1$(G_{tem}, cost)$
			\ENDWHILE
			\STATE ${N_D} \leftarrow Driver$
			\ENDIF
		\end{algorithmic}
	\end{algorithm}
	
	\subsection{Complexity Analysis}
	The method to find the minimum driver node set is usually reduced to a bipartite graph matching problem, and the classical algorithms are the Hopcroft-Karp algorithm with time complexity $O(V1/2*E)$, and the PBH criterion with time complexity $O(V3)$. The time complexity of our method is $O(lg(V-3)*(E+E/2+ . ...+E/root V))=O(log2(n+root n-3)*E)$. As we can see, as the number of nodes increases, especially after the number is greater than 20, the time complexity grows more slowly than other methods, and the time complexity is lower.
	
	\section{Experiment}
	Through the above reasoning and proof, the feasibility and credibility of our method are verified. We validate the effectiveness of our method on a real network by using four real networks as test datasets, searching for driver node set and analyzing the results. The size of nodes and edges varies from hundreds to hundreds of thousands, which are representative. Now, the experimental results are discussed and analyzed in the order of network size from small to large. 
	
	\subsubsection{Neural Network of Cryptobacterium histolytica}
	The analysis of neural networks can find the principles of organisms exhibiting the ability of learning, memory, exploration\cite{b28}, and complex movement. J. G. White et al\cite{b29} discovered that the neural network of Cryptobacterium hidradenum has more than 300 neurons and about 1000 cells in different connection groups. The neural network is a directed complex network in which different neurons have different stimulus costs, and the data were integrated and made publicly available by D. Watts and S. Strogatz\cite{b30}.
	
	\begin{figure}[htp]
		\centerline{\includegraphics[width=0.95\columnwidth]{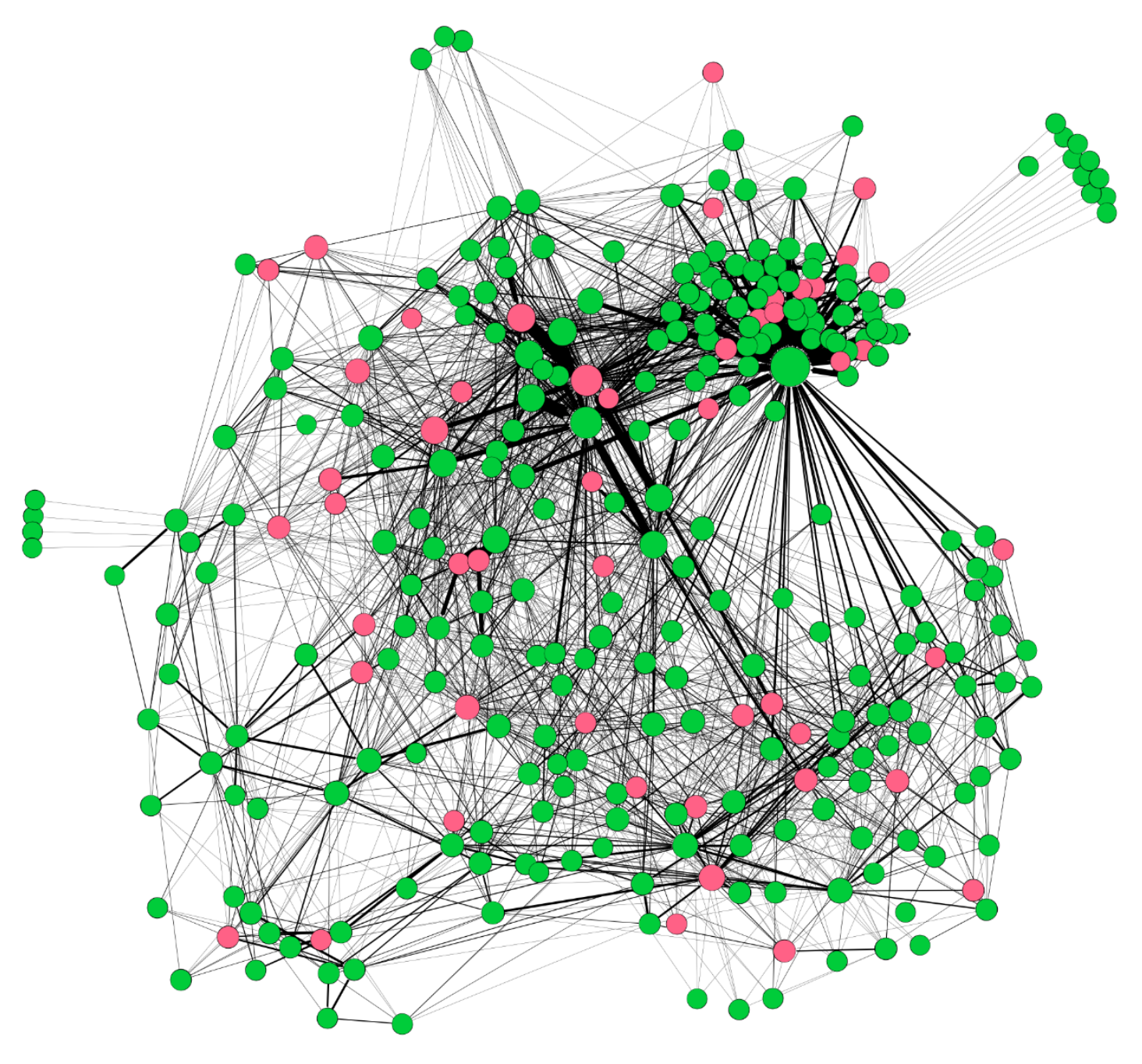}}
		\caption{Neural network of Caenorhabditis elegans and driver nodes.}
		\label{fig14}
	\end{figure}
	
	Fig. \ref{fig14} represents the directed complex network of the neural network, which contains 297 nodes and 2345 edges. The visual size of the nodes is positively correlated with the degree, and the greater the degree, the greater the display of the nodes. The nodes contain two colors, red nodes are the driver nodes, totaling 50, accounting for 16.84\% of the network, and green nodes are non-driver nodes, accounting for 83.16\%.
	
	\subsubsection{Blog Hyperlink Network}
	
	A hyperlinked directed network among blogs about American politics, mainly the hyperlinked information contained in blogs, was recorded by adam and Glance in 2005\cite{b31}.
	
	\begin{figure}[htp]
		\centerline{\includegraphics[width=\columnwidth]{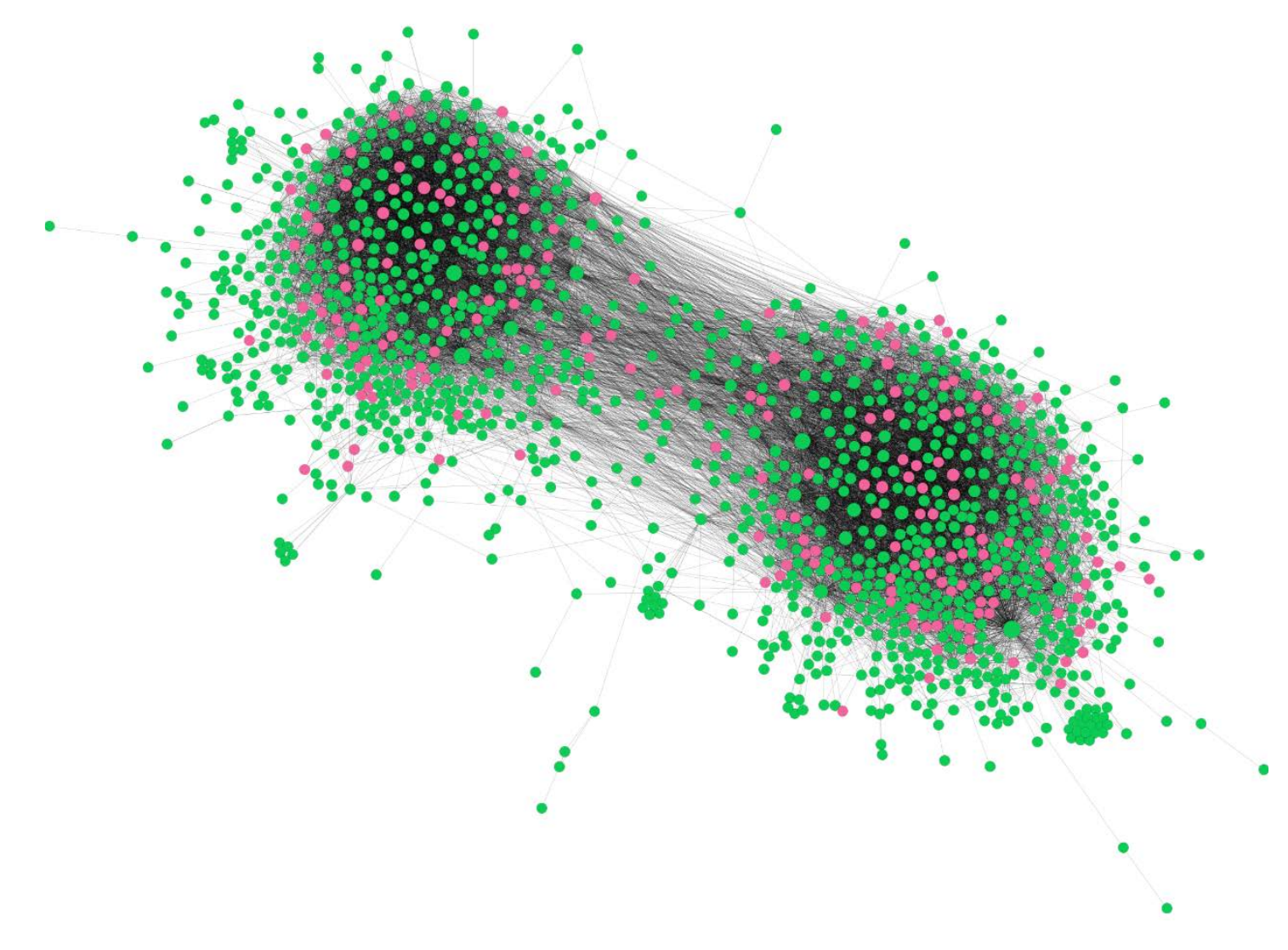}}
		\caption{Hyperlink network between blogs and driver nodes.}
		\label{fig15}
	\end{figure}
	
	Fig. \ref{fig15} represents the directed complex network of blog hyperlinks, containing 1224 nodes and 19022 edges, the node configuration in the network is consistent with the above figure, in which 192 nodes are driver nodes by red, accounting for 15.69\% of the overall nodes in the network.
	
	\begin{figure}[htp]
		\centerline{\includegraphics[width=\columnwidth]{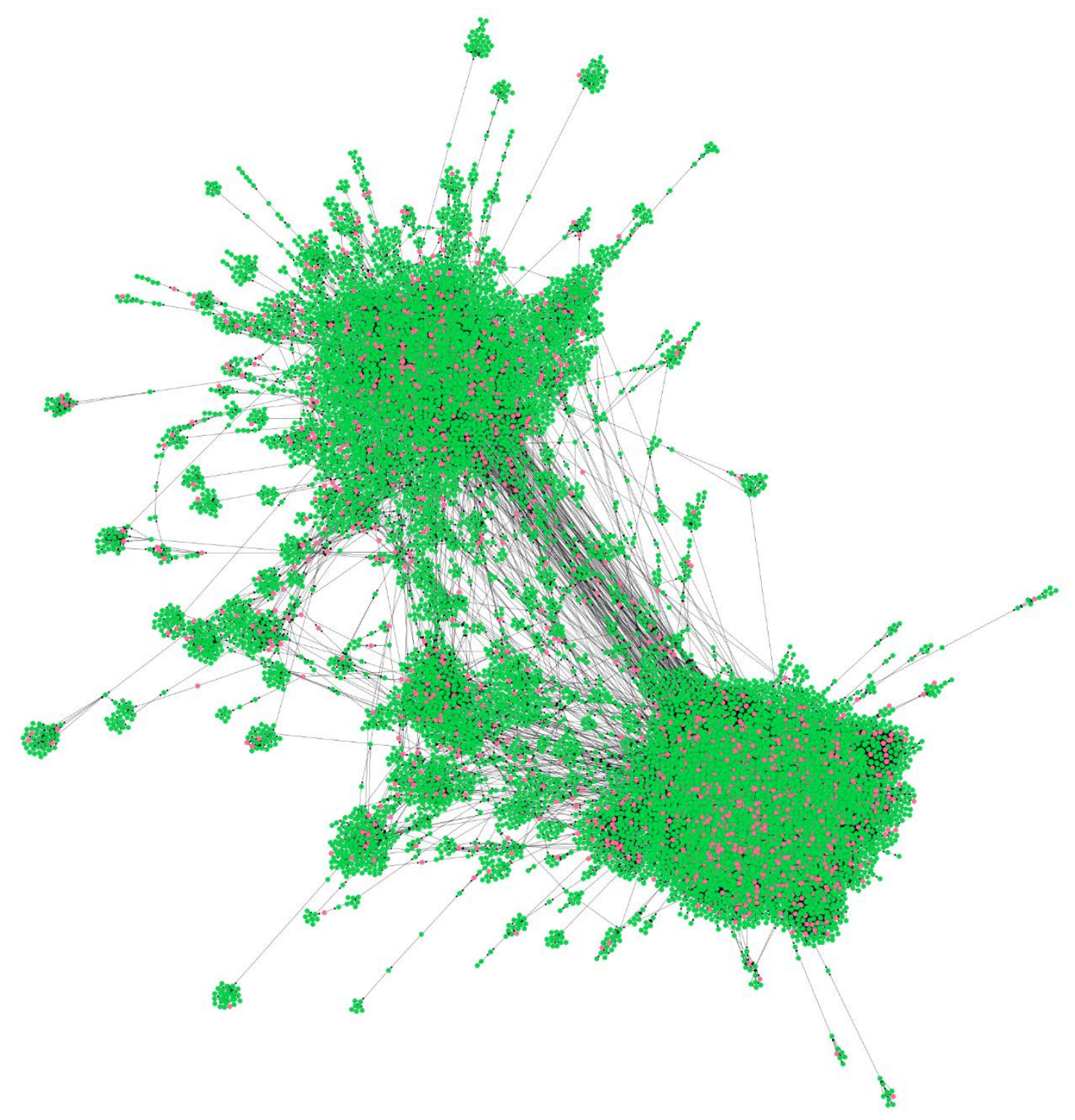}}
		\caption{Social networks for Deezer users and driver nodes.}
		\label{fig16}
	\end{figure}
	
	Figure 16 represents the directed complex network of small social attention relations, containing 28281 nodes and 92752 edges, the node configuration in the network is consistent with the above figure, among which 3566 nodes are driven by red nodes, accounting for 12.61\% of the network.
	
	\subsubsection{large social communication network}
	
	On July 4, 2012, CERN held a symposium and press conference to announce the discovery of a new particle with the elusive Higgs boson feature. The Higgs particle dataset focuses on the propagation data on Twitter before, during and after the launch of the new particle. Specifically, it contains all messages about the discovery from July 1 to 7, 2012, in which users used the "mention" action to form relationships with each other.
	
	\begin{figure}[htp]
		\centerline{\includegraphics[width=0.9\columnwidth]{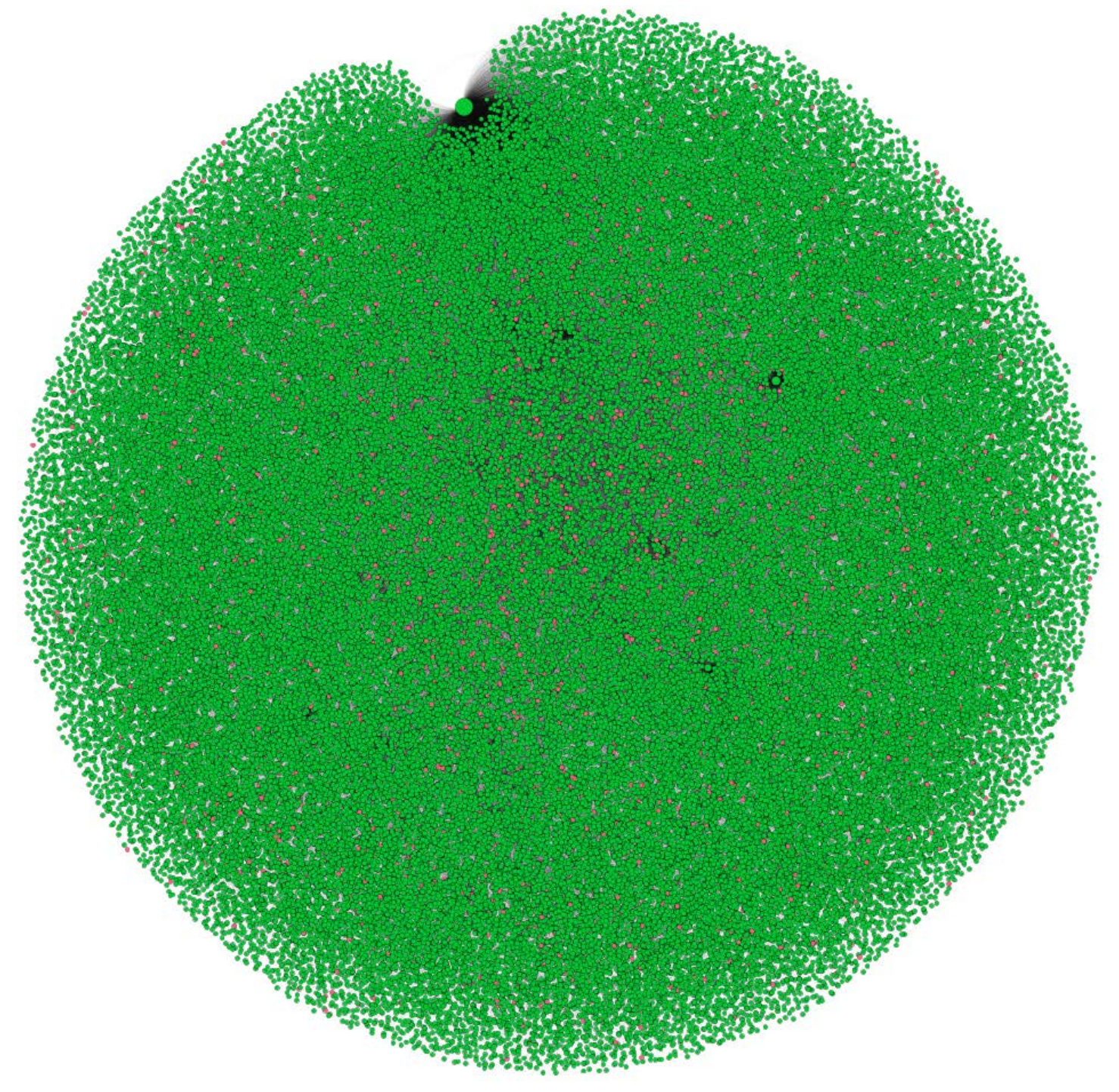}}
		\caption{Higgs message propagation network and driver nodes.}
		\label{fig17}
	\end{figure}
	
	Fig. 17 represents the directional complex network with 116408 nodes and 145774 edges for social network message propagation, the configuration of nodes in the network is consistent with the above figure, among which 3549 nodes are driver nodes by red, accounting for 3.05\% of the network.
	
	It can be seen that in the above-mentioned real networks, the proportion of driver nodes is not large, the maximum does not exceed 16.84\%, indicating that a small number of nodes can control the whole network, For a large and complex network with more than 300 nodes, no more than 1/6 of the nodes can be used as the driver node set to achieve control of the network. In addition, as the size of the network increases, the number of nodes and edges gradually increases, the proportion of the driver nodes decreases, which means that the proportion of the driver nodes is negatively related to the size of the network. It can be said that the more complex the network is, the richer the edges are, the smaller the proportion of the driver node is.
	
	At the same time, we also found another interesting point that the driver nodes are not the visually largest nodes, which are the nodes with the highest degree values, but their out-degree /in-degree ratios are higher.
	
	\section{conclusion}
	
	At present, the controllability of complex networks belongs to the cross research direction, which has attracted the attention of scholars in different fields. Network controllability theory has been applied to various real networks, but it is not enough to solve some practical problems of real networks, so we further improve the related theory and strengthen the wider application research.
	
	In this paper, we design a new method for searching the driver node set in complex networks without taking into account the node cost and the difficulty of efficiently coping with large networks, in order to solve the practical control problems of directed networks with different node costs in finite states. Firstly, we prove the sufficient conditions for the network to be strictly controllable under partial node control signal, and secondly, we develop a driver node set search algorithm based on the maximum augmenting path search, the equivalence controllability after network pruning, and the modal isomorphism matching model. Finally, the algorithm is validated by three types of experiments: examples, classical networks and real networks. The accuracy of the results is verified by illustrating the specific solving process of our method through examples. The efficiency of our method is verified by a large number of classical networks, and the connection between driver node set and the other nodes is found by the real network, which shows that a low percentage of driver nodes can control the whole network.

\end{document}